\newcolumntype{L}[1]{>{\raggedright\arraybackslash}p{#1}}
\newcolumntype{C}[1]{>{\centering\arraybackslash}p{#1}}
\newcolumntype{R}[1]{>{\raggedleft\arraybackslash}p{#1}}
\def\BibTeX{{\rm B\kern-.05em{\sc i\kern-.025em b}\kern-.08em
    T\kern-.1667em\lower.7ex\hbox{E}\kern-.125emX}}
\begin{document}

\title{FI-GRL: Fast Inductive Graph Representation Learning via Projection-Cost Preservation
}

\author{
\IEEEauthorblockN{Fei Jiang\IEEEauthorrefmark{1}, Lei Zheng\IEEEauthorrefmark{2}, Jin Xu\IEEEauthorrefmark{1}, Philip S. Yu\IEEEauthorrefmark{2}\IEEEauthorrefmark{3}}
\IEEEauthorblockA{\IEEEauthorrefmark{1}Department of Computer Science, Peking University, Beijing, China\\
\IEEEauthorrefmark{2}Department of Computer Science, University of Illinois at Chicago, IL, US \\
\IEEEauthorrefmark{3}Institute for Data Science, Tsinghua University, Beijing, China.
\\
allen.feijiang@gmail.com, \{lzheng21, psyu\}@uic.edu, jxu@pku.edu.cn}
}

\maketitle

\begin{abstract}
Graph representation learning aims at transforming graph data into meaningful low-dimensional vectors to facilitate the employment of machine learning and data mining algorithms designed for general data. Most current graph representation learning approaches are transductive, which means that they require all the nodes in the graph are known when learning graph representations and these approaches cannot naturally generalize to unseen nodes. In this paper, we present a Fast Inductive Graph Representation Learning framework (FI-GRL) to learn nodes' low-dimensional representations. Our approach can obtain accurate representations for seen nodes with provable theoretical guarantees and can easily generalize to unseen nodes. Specifically, in order to explicitly decouple nodes' relations expressed by the graph, we transform nodes into a randomized subspace spanned by a random projection matrix. This stage is guaranteed to preserve the projection-cost of the normalized random walk matrix which is highly related to the normalized cut of the graph. Then feature extraction is achieved by conducting singular value decomposition on the obtained matrix sketch. By leveraging the property of projection-cost preservation on the matrix sketch, the obtained representation result is nearly optimal. To deal with unseen nodes, we utilize folding-in technique to learn their meaningful representations. Empirically, when the amount of seen nodes are larger than that of unseen nodes, FI-GRL always achieves excellent results. Our algorithm is fast, simple to implement and theoretically guaranteed. Extensive experiments on real datasets demonstrate the superiority of our algorithm on both efficacy and efficiency over both macroscopic level (clustering) and microscopic level (structural hole detection) applications.
\end{abstract}

\begin{IEEEkeywords}
Graph Representation Learning, Inductive Learning, Graph Mining
\end{IEEEkeywords}

\newtheorem{definition}{\textbf{Definition}}
\newtheorem{theorem}{\textbf{Theorem}}
\newtheorem{lemma}{\textbf{Lemma}}

\section{Introduction}
Graphs with nodes representing entities and edges representing relationships between entities are ubiquitous in various research fields. However, since a graph is naturally expressed in a node-interrelated way, it is exhausted to directly design different complicated graph algorithms for various kinds of mining and analytic purposes on graph data. Graph representation learning (also known as, graph embedding or network embedding) aims at learning from graph data to obtain low-dimensional vectors to represent nodes without losing much information contained in the original graph. Afterwards, one can apply bunches of off-the-shelf machine learning and data mining algorithms designed for general data on various important applications (e.g., clustering \cite{qiu2007clustering}, structural hole detection \cite{he2016joint}, link prediction \cite{lin2015learning}, visualization \cite{maaten2008visualizing}, etc).

%Nowadays, graph representation learning has drawn researchers' attention from machine learning and data mining fields \cite{dong2017metapath2vec,ma2017multi,tu2017cane,tang2015line}, these works either are not flexible enough to handle the scale of large graphs and the dynamic feature of graph streams, or they are lack of theoretical support to produce reliable representation results \cite{goyal2017graph}. To cope with these situations, we explicitly decompose the graph representation learning problem into three tasks, that is, decoupling relationships between nodes, dimension reduction, and feature extraction. The advantage of this decomposition is obvious: after decoupling nodes' relations, advanced techniques used for dimension reduction and feature extraction on general data can be directly applied. Moreover, by judicious design, this flexible architecture can adapt to various scenarios including large graphs and graph streams. 

Due to its ability in facilitating graph analysis, graph representation learning has drawn researchers' attentions from machine learning and data mining fields \cite{dong2017metapath2vec,ma2017multi,tu2017cane}. Most of these works are focusing on static networks, such as explicitly preserving local or high-order proximity \cite{tang2015line,wang2016structural,cavallari2017learning}, learning representations using truncated walks \cite{perozzi2014deepwalk,tu2016max}; using matrix factorization technique to obtain latent vectors \cite{yang2017fast,yang2015network,ma2017multi}, incorporating heterogeneous information \cite{tu2017cane,dong2017metapath2vec}, etc. Most of these methods are lack of theoretical support to produce reliable representation results. These methods are inherently transductive, which means that they are acting as black boxes that only care about learning representations but do not have an internal mechanism to naturally generalize to unseen nodes. 

%Graph representation learning needs to handle two tasks simultaneously while preserving as much information as possible: decoupling relationships between nodes and dimension reduction. The first task is essential, since without decoupling nodes' relations, the obtained representations cannot be utilized by traditional machine learning algorithms directly and effectively. The second task of dimension reduction becomes critically beneficial when dealing with large graphs. 

%When we want to learning a meaningful representation from a large graph, we should consider three situations: 1) graph representation learning should be faster than directly applying a graph mining algorithm for desired applications; 2) if the graph is too large to fit in fast memory at once or some nodes are unknown currently, graph representation learning algorithms should flexible enough to adapt incremental situation. Although there situations are general and critical, little work has been focusing on these situations. 

\begin{figure}[!tb]
\centering
\includegraphics[width=\linewidth]{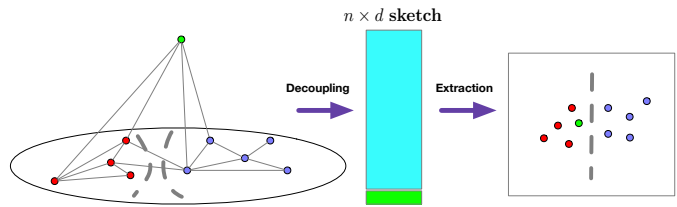}
\caption{A simple illustration of the intuition of our framework. Blue and red points can be assigned to two clusters respectively and the green node is an unseen node. 
%For the seen nodes when learning representations, the first stage is for decoupling nodes' relations into a matrix sketch. The second stage is for feature extraction. Both two stages can reduce the dimension of the data. This process is fast and the optimality is theoretically guaranteed. For the unseen green node,  our approach can project the node into an intermediate vector incrementally. The matrix sketch of seen nodes can be utilized to transform the intermediate vector into a low-dimensional vector. 
Our approach assigns the unseen node to an appropriate location with preserving clustering membership and local proximity.
}
\label{fig:intuition}
\end{figure}

Our goal of graph representation learning is to design a fast and flexible framework that can preserve important graph topological information (e.g., clustering membership, node similarity, etc) with provable theoretical guarantees and can be naturally generalize to unseen nodes.
In this paper, in order to achieve this goal, the Fast Inductive Graph Representation Learning (FI-GRL) framework is proposed. FI-GRL consists of two stages: decoupling and feature extraction. The intuition of this architecture is illustrated in Figure \ref{fig:intuition}. The first stage is designed for decoupling nodes' relations by utilizing an oblivious algorithm, Johnson-Lindenstrauss random projection. %For a graph which is expressed as a matrix, for now denoted as $\boldsymbol{\mathcal{L}}\in \mathbb{R}^{n\times n}$, 
For a graph $\mathcal{G}$, this stage generates a matrix sketch $\mathbf{M}\in\mathbb{R}^{n\times d}$, where $n$ is the number of nodes of graph $\mathcal{G}$ and $d$ is the sketch size, a parameter that can be automatically determined by our approach and is much smaller than $n$. The matrix sketch $\mathbf{M}$ approximates $\mathcal{G}$ well with theoretical guarantees. The second stage extracts meaningful feature contained in $\mathbf{M}$ by low rank approximation. Dimension reduction is achieved collaboratively in both of these two stages. The resulting representations by this framework are theoretically guaranteed to perform well on constrained low rank approximation tasks (e.g., $k$-means clustering). The main contributions of this paper are summarized as follows:
\begin{itemize}

\item \textbf{Architecture and Randomization}: The proposed framework FI-GRL is fast and flexible enough to handle large graphs. Since the decoupling stage adopts an oblivious randomized algorithm, nodes can be processed sequentially with a single pass and without storing the entire graph. The matrix sketch is much smaller that can be dealt with much faster by the feature extraction stage. Moreover, the first stage is projection-cost preserving, which makes sure that the resulting representations extracted by the second stage are optimal up to an approximation ratio $\epsilon$. As far as we know, this is the first time that randomized algorithms are introduced to deal with the graph representation learning problem. 

\item  \textbf{Theoretical Analysis}: We analyze our algorithm theoretically. We prove the optimality of our algorithm in terms of the absolute difference of projection-cost between the learned representations and the desired representations, and also in terms of the absolute distance difference between their corresponding $k$-mean centroids in Theorem \ref{theorem:difference}. For the choice of the parameter (the sketch size $d$), we give a theoretical guidance in Section \ref{sec:parameter_complex} and empirical analysis in Section \ref{sec:para_empirical_sketch}.

\item \textbf{Inductive Learning}: Our two-stage framework can naturally generalize to learn representations of unseen nodes. We adopt an incremental singular value decomposition with folding-in technique on the matrix sketch to learn representations of unseen nodes. The empirical results reporting in Section \ref{sec:perfor_streaming} demonstrate the effectiveness of our method on unseen nodes.

\item \textbf{Empirical Study}: FI-GRL can produce graph representations of different accuracy for different levels of applications. In Section \ref{sec:results}, extensive experiments conducted on both macroscopic level (clustering) and microscopic level (structural hole detection) applications show the superiority of our framework in efficacy and efficiency.
\end{itemize}

%The rest of this paper is organized as follows. Preliminary and problem formulation is given in Section \ref{sec:prelim}. Methodology is presented in Section \ref{sec:method}. Section \ref{sec:results} shows experimental results. Related work is briefly reviewed in Section \ref{sec:work}. Finally, we draw the conclusions in Section \ref{sec:conclusion}.
\section{Preliminary and Problem Formulation} 
\label{sec:prelim}
A graph $\mathcal{G}=(V,E,\mathbf{W})$ is a tuple with three elements, $V$ denoting node set, $E$ denoting edge set and $\mathbf{W}$ denoting weighted adjacency matrix. Without ambiguity, we use the term graph and network interchangeably. In this paper, to facilitate the distinction, we use lowercase letters (e.g., $\lambda$) to denote scalars, bold lowercase letters (e.g., $\mathbf{x}$) to denote vectors, bold uppercase letters (e.g., $\mathbf{W}$) to denote matrices and calligraphic letters $\mathcal{G}$ to denote graphs. The symbol table is shown in Table \ref{tab:Symbol}. 

\begin{table}[htbp]
%\scriptsize
\small
\caption{\bf List of basic symbols}
\centering %set to center
\begin{tabular}{|c|l|}
\hline
Symbol&Definition\\
\hline
$\mathcal{G}$ & Graph \\

$V,E,n,m$& Node, edge set and its corresponding volume\\

%$(\mathcal{G}_1;\mathcal{G}_2,\cdots,\mathcal{G}_s)$ & Graph streams with base graph $\mathcal{G}_1$\\

%$d_\mathcal(v)$& Degree of node $v$ \\

$N(v)$ & Neighbor set of node $v$ \\

$\mathbf{W},\mathbf{D}$& Weighted adjacency, diagonal degree matrices\\

$ \boldsymbol{\mathcal{L}}$ & Normalized random walk matrix\\

$||\boldsymbol{\cdot}||_F,||\boldsymbol{\cdot}||_2$& Frobenius norm and spectral norm \\

\hline
\end{tabular}
\label{tab:Symbol}
\end{table}

We first define the problem of graph representation learning as follows:
\begin{definition} 
\textbf{(Graph Representation Learning)} Given a graph $\mathcal{G}=(V,E,\mathbf{W})$, for a fixed dimension number $k\ll |V|$, graph representation learning aims at learning a map $f(v|\mathbf{W})= \mathbf{y} \in \mathbb{R}^k$ for every $v\in V$. 
\end{definition}

Graph representation learning will generate a low-dimensional vector for every node in the graph. The obtained vectors should preserve important information (e.g., clustering membership, node similarity, etc) hidden in the graph. The scenario is even more challenging when considering unseen nodes. We generally define the graph representation learning task for unseen nodes.

\begin{definition} 
\label{def:grlun}
\textbf{(Graph Representation Learning for Unseen Nodes)} A graph $\mathcal{G}'=(V',E',\mathbf{W}')$ is extended from graph $\mathcal{G}=(V,E,\mathbf{W})$ by adding nodes $V'\backslash V$ and associated edges $E'\backslash E$ after graph representations of $\mathcal{G}$ have been obtained. For each node $v\in V'\backslash V$, we learn a map $f(v)= \mathbf{y} \in \mathbb{R}^k$ without recomputing representations obtained thus far. 
\end{definition}

\begin{comment}
\begin{definition} 
\label{def:g}
\textbf{(Graph Stream)} A graph stream is a series of snapshots $(\mathcal{G}_1;\mathcal{G}_2,\mathcal{G}_3,\cdots ,\mathcal{G}_s)$, where $\mathcal{G}_1=(V_{\mathcal{G}_1},E_{\mathcal{G}_1},\mathbf{W}_{\mathcal{G}_1})$ is the base graph containing more than one node, and $\mathcal{G}_i=(V_{\mathcal{G}_{i-1}}\cup \{v_i\},E_{\mathcal{G}_{i-1}}\cup \{E_i\},\mathbf{W_{\mathcal{G}_{i-1}}}+\mathbf{W}_{i})$, for $1<i\leq s$. $E_i$ contains the edges connecting nodes between $v_i$ and its neighbors in snapshot $\mathcal{G}_{i-1}$, and $\mathbf{W}_{i}$ is its corresponding weight matrix in $\mathcal{G}_i$. It means that each snapshot has only one new node and its corresponding edges added.
\end{definition}

\begin{definition}
\label{de:s-grl}
\textbf{(Streaming Graph Representation Learning)} Given a graph stream $(\mathcal{G}_1;\mathcal{G}_2,\mathcal{G}_3,\cdots ,\mathcal{G}_s)$, for a fixed dimension number $k$, streaming graph representation learning (streaming GRL) aims at learning a mapping $f$ for the new added node $v_i$ in $\mathcal{G}_i$ such that $f(v_i|\mathbf{A},\mathbf{W}_i)=\mathbf{y}_{i}\in \mathbb{R}^k$, where $\mathbf{A}$ denotes different information under different scenario.
\end{definition}
\end{comment}

\section{Methodology}
\label{sec:method}
In this section, we first give some observations to show 1) why we choose the normalized random walk matrix $\boldsymbol{\mathcal{L}}$ as the initial input matrix of the graph. 2) how important tasks (e.g., $k$-means clustering) can be reduced to constrained low rank approximation problem on which our framework is theoretically guaranteed to work well. Then, we present our two-stage framework in detail and generalize it to unseen nodes. Finally, we provide a guidance on choosing parameters and show the complexity analysis. Our FI-GRL framework is illustrated in Figure \ref{fig:framework}.
%In this section, we first show that a measure on graphs called normalized cut can be explicitly captured by specified matrix. We also observe that important tasks can reduce to constrained low rank approximation problem on which our framework is theoretically guaranteed to work well. Then, we describe the two stages separately. Finally, the situation for large graphs and graph streams are presented.
\begin{figure*}[!tb]
\centering
\includegraphics[width=0.8\linewidth]{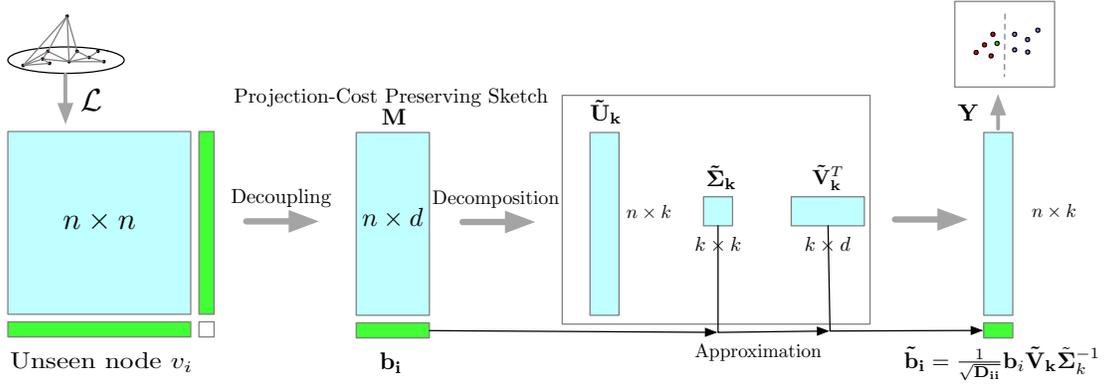}
\caption{An illustration of FI-GRL framework. This framework consists of two stages. The first stage is used for decoupling the nodes' relations by using a random projection algorithm. The second stage is used for feature extraction on matrix sketch $\mathbf{M}$. This two-stage framework provides a natural way for obtaining near-optimal representations for seen nodes and for generating approximate representations for unseen nodes.}
\label{fig:framework}
\end{figure*}
\subsection{Observations}
\subsubsection{Normalized Cut}
In literature \cite{shi2000normalized}, image segmentation is treated as a graph partition problem based on three metrics average association, normalized cut and average cut. It is demonstrated that normalized cut is a desired choice which seeks balance between finding clumps and finding split nodes. Normally, for graph $\mathcal{G}=(V,E,\mathbf{W})$ and two disjoint node sets $A,B\subset V$, $A\cap B=\varnothing$, the normalized cut is defined as:
\begin{equation}
Ncut=\frac{\Sigma_{u\in A,v\in B}\mathbf{W}(u,v)}{\Sigma_{u\in A,v\in V}\mathbf{W}(u,v)}+\frac{\Sigma_{u\in A,v\in B}\mathbf{W}(u,v)}{\Sigma_{u\in B,v\in V}\mathbf{W}(u,v)}.
\end{equation}
The problem of finding a graph partition with optimal normalized cut can be reduced to finding the generalized eigenvector problem of the following equation:
\begin{equation}
\label{eq:eigen_origin}
(\mathbf{D}-\mathbf{W})\mathbf{x}=\lambda \mathbf{Dx}.
\end{equation}
Then, applying k-means clustering on several smallest non-trivial eigenvectors $\mathbf{x}s$ will achieve the optimal graph partitions in terms of normalized cuts. Thus, the matrix related to Equation \ref{eq:eigen_origin} is a good choice for representing a graph.
\subsubsection{Constrained Low Rank Approximation}
\label{sec:clra}
To design a graph representation learning algorithm with preserving graph information such as clustering membership, we consider the constrained low rank approximation problem to which many important tasks, including $k$-means clustering, can be reduced. The constrained low rank approximation can be defined as
\begin{definition}
\label{de:constrained_LRA}
(\textbf{Constrained Low Rank Approximation}) For a matrix $\mathbf{A}\in \mathbb{R}^{n\times n}$ and any set $S$ of rank $k$ orthogonal projection matrices in $\mathbb{R}^{n\times n}$, constrained $k$ rank approximation tries to find
$$
\mathbf{P}^*=\underset{\mathbf{P}\in S}{\operatorname{argmin}} 
||\mathbf{A}-\mathbf{PA}||^2_F,
$$
where $||\mathbf{A}-\mathbf{PA}||^2_F$ is called as the projection-cost of $\mathbf{P}$.
\end{definition}

$k$-means clustering and approximate singular value decomposition (SVD) problems are constrained low rank approximation problems. More precisely, $k$-means clustering aims at dividing $n$ vectors of $\{\mathbf{a_1},\cdots,\mathbf{a_n}\}$, where $\mathbf{a_i}\in \mathbb{R}^b$, into $k$ clusters, ${C_1,\cdots,C_k}$. We denote the centroid of cluster $C_i$ as $\mathbf{u}_i$. The goal of $k$-means clustering is to minimize the following objective function:
$
\sum\limits_{j=1}^k \sum\limits_{\mathbf{a_i}\in C_j} ||\mathbf{a_i}-\mathbf{u_j}||^2_2.
$

We transform this objective function into the matrix form. We denote the cluster indicator matrix as matrix $\mathbf{X}\in \mathbb{R}^{n\times k}$, where $\mathbf{X}_{ij}=1/\sqrt{|C_j|}$ if $\mathbf{a}_i$ is divided into cluster $C_j$. So in the matrix form, $k$-means clustering is to minimize the following equation:
$
||\mathbf{A}-\mathbf{XX}^T \mathbf{A}||_F^2=\sum\limits_{j=1}^k \sum\limits_{\mathbf{a_i}\in C_j} ||\mathbf{a_i}-\mathbf{u_j}||^2_2.
$

Clearly, $\mathbf{XX}^T$ is a projection matrix that projects the points' vectors into their cluster centroids. Therefore, by definition \ref{de:constrained_LRA}, $k$-means clustering is a constrained low rank approximation problem with the set $S$ being the all possible projection matrix $\mathbf{XX}^T$, where $\mathbf{X}$ is the cluster indicator matrix. For SVD, it is also a constrained low rank approximation problem that tries to find the optimal rank $k$ approximation of $\mathbf{A}$ in the unconstrained set $S$ as all possible rank $k$ projection matrices. The optimal solution is its top $k$ left singular vectors $\mathbf{U}_k$. 
%For approximate SVD, due to different approximate strategy the algorithm adopts, the set $S$ is just a subset of possible rank $k$ projection matrices.

\subsection{Decoupling with Projection-Cost Preservation}
Suppose that if we can find a small matrix whose learned graph representations can achieve nearly the same results with that on the original $n\times n$ matrix, we will get lots of speed and space benefits. In this subsection, we show how we can obtain the small matrix (we call it matrix sketch) and also demonstrate its optimality. Formally, for a matrix $\mathbf{A}\in \mathbb{R}^{n\times n}$, we want to find a matrix sketch $\tilde{\mathbf{A}}\in \mathbb{R}^{n\times d}$ (where $d\ll n$) to approximate $\mathbf{A}$ well in a way that for constrained low rank approximation problem as in Definition \ref{de:constrained_LRA}, we can optimize projection matrix $\mathbf{P}\in \mathbb{R}^{n\times n}$ over the sketch $\tilde{\mathbf{A}}$ instead of optimizing $\mathbf{P}$ over $\mathbf{A}$.  
Firstly, we define the projection-cost preservation sketch as in \cite{cohen2015dimensionality}. 
\begin{definition}
\label{def:proj-cost preservation}
(\textbf{Rank $k$ Projection-Cost Preserving Sketch}) $\mathbf{\tilde{A}} \in \mathbb{R} ^{n \times d}$ is a rank $k$ projection-cost preserving sketch of $\mathbf{A}\in \mathbb{R}^{n\times n}$ with error $0<\epsilon <1 $ if, for all rank $k$ orthogonal projection matrices $\mathbf{P} \in \mathbb{R}^{n\times n}$, 
$$||\mathbf{A}-\mathbf{PA}||^2_F\leq ||\mathbf{\tilde{A}}-\mathbf{P\tilde{A}}||^2_F +c\leq (1+\epsilon)||\mathbf{A}-\mathbf{PA}||^2_F,$$
for some fixed non-negative constant $c$ that may depend on $\mathbf{A}$ and $\mathbf{\tilde{A}}$ but is independent of $\mathbf{P}$.
\end{definition}
This definition implies that the projection cost $||\mathbf{\tilde{A}}-\mathbf{P\tilde{A}}||^2_F$ of any projection $\mathbf{P}$ on $\mathbf{\tilde{A}}$ will be a good estimation of the projection cost $||\mathbf{A}-\mathbf{PA}||^2_F $ of the same projection over $\mathbf{A}$. The following lemma indicates that if $\mathbf{\tilde{A}}$ is a rank $k$ projection-cost preservation sketch of $\mathbf{A}$, one can optimize $\mathbf{\tilde{A}}$ to get the optimal projection matrix $\mathbf{P}$ to solve the constrained low rank approximation problem.

\begin{lemma}
\label{lemma:lrp_via_pcp}
Suppose $\mathbf{\tilde{A}}\in \mathbb{R}^{n\times d}$ is a projection-cost preservation sketch of $\mathbf{A}^{n\times n}$ with approximation ratio $\epsilon$ over the set $S$ of all rank $k$ projection matrices. Let $\mathbf{P}^*=\underset{\mathbf{P}\in S}{\operatorname{argmin}} 
||\mathbf{A}-\mathbf{PA}||^2_F$ and $\mathbf{\tilde{P}}^*=\underset{\mathbf{P}\in S}{\operatorname{argmin}} 
||\mathbf{\tilde{A}}-\mathbf{P\tilde{A}}||^2_F$. Then,
$
||\mathbf{A}-\mathbf{\tilde{P}^*A}||^2_F \leq (1+\epsilon)||\mathbf{A}-\mathbf{P^*A}||^2_F.
$
\end{lemma}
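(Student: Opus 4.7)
The plan is to prove the lemma by a straightforward three-step chain that sandwiches the quantity of interest between the two inequalities provided by the projection-cost preservation (PCP) property, exploiting the optimality of $\tilde{\mathbf{P}}^*$ on the sketch as the middle link. All three ingredients are immediately available: the left-hand inequality of Definition \ref{def:proj-cost preservation} (applied to $\tilde{\mathbf{P}}^*$), the optimality of $\tilde{\mathbf{P}}^*$ for the sketch, and the right-hand inequality of Definition \ref{def:proj-cost preservation} (applied to $\mathbf{P}^*$). The additive constant $c$ appearing in the PCP definition will cancel because it is the same on both sides (independent of $\mathbf{P}$).

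First, I would apply the left inequality in Definition \ref{def:proj-cost preservation} with the projection $\mathbf{P}=\tilde{\mathbf{P}}^*$ to obtain
\[
\|\mathbf{A}-\tilde{\mathbf{P}}^*\mathbf{A}\|_F^2 \;\le\; \|\tilde{\mathbf{A}}-\tilde{\mathbf{P}}^*\tilde{\mathbf{A}}\|_F^2 + c.
\]
Next, since $\tilde{\mathbf{P}}^*$ is by definition the minimizer of $\|\tilde{\mathbf{A}}-\mathbf{P}\tilde{\mathbf{A}}\|_F^2$ over $S$, and $\mathbf{P}^*\in S$, I would use
\[
\|\tilde{\mathbf{A}}-\tilde{\mathbf{P}}^*\tilde{\mathbf{A}}\|_F^2 \;\le\; \|\tilde{\mathbf{A}}-\mathbf{P}^*\tilde{\mathbf{A}}\|_F^2.
\]
Finally, I would invoke the right inequality in Definition \ref{def:proj-cost preservation} with $\mathbf{P}=\mathbf{P}^*$:
\[
\|\tilde{\mathbf{A}}-\mathbf{P}^*\tilde{\mathbf{A}}\|_F^2 + c \;\le\; (1+\epsilon)\,\|\mathbf{A}-\mathbf{P}^*\mathbf{A}\|_F^2.
\]
Chaining the three displays (adding $c$ to both sides of the second one so that the $c$ terms line up) yields the claimed bound.

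There is essentially no obstacle here; the lemma is a clean consequence of the definition combined with the optimality of $\tilde{\mathbf{P}}^*$ on the sketch. The only subtlety worth being explicit about is that the constant $c$ in Definition \ref{def:proj-cost preservation} is the \emph{same} non-negative constant on both sides of the sandwich, so it cancels upon subtraction; without this independence-from-$\mathbf{P}$ property, the argument would fail. I would mention this explicitly so the reader sees why the PCP guarantee is strong enough to transfer optimality from $\tilde{\mathbf{A}}$ back to $\mathbf{A}$.
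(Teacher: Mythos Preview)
Your proposal is correct and follows essentially the same approach as the paper: the paper also uses the left PCP inequality at $\tilde{\mathbf{P}}^*$, the optimality of $\tilde{\mathbf{P}}^*$ on the sketch, and the right PCP inequality at $\mathbf{P}^*$, chaining them in the same way (with the same $c$ cancelling). The only difference is the order in which the three displays are written down.
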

\begin{proof}
Since $\mathbf{\tilde{P}^*}$ is the optimal solution for $\mathbf{\tilde{A}}$, then 
\begin{equation}
\label{eq:lemma1}
||\mathbf{\tilde{A}}-\mathbf{\tilde{P}^*\tilde{A}}||^2_F\leq ||\mathbf{\tilde{A}}-\mathbf{P^*\tilde{A}}||^2_F.
\end{equation}
$\mathbf{\tilde{A}}$ is a projection-cost preservation sketch of $\mathbf{A}$, so for the projection matrix $\mathbf{P}^*$, the following equation holds
\begin{equation}
\label{eq:lemma2}
||\mathbf{\tilde{A}}-\mathbf{P^*\tilde{A}}||^2_F+c\leq (1+\epsilon) ||\mathbf{A}-\mathbf{P^*A}||^2_F.
\end{equation}
Again, consider projection matrix $\mathbf{\tilde{P}^*}$, we have
\begin{equation}
\label{eq:lemma3}
||\mathbf{A}-\mathbf{\tilde{P}^*A}||^2_F\leq ||\mathbf{\tilde{A}}-\mathbf{\tilde{P}^*\tilde{A}}||^2_F+c.
\end{equation}
Combining equation \ref{eq:lemma1} and \ref{eq:lemma3}, we get 
\begin{equation}
\label{eq:lemma4}
||\mathbf{A}-\mathbf{\tilde{P}^*A}||^2_F \leq ||\mathbf{\tilde{A}}-\mathbf{P^*\tilde{A}}||^2_F+c.
\end{equation}
Finally, combining equation \ref{eq:lemma2} and \ref{eq:lemma4}, it yields
$$
||\mathbf{A}-\mathbf{\tilde{P}^*A}||^2_F \leq (1+\epsilon)||\mathbf{A}-\mathbf{P^*A}||^2_F.
$$
\end{proof}
The above lemma provides us a theoretical guarantee to get meaningful information from a matrix sketch, which is computationally efficient. To capture the graph information related to normalized cut, we consider Equation \ref{eq:eigen_origin}. Solving this generalized eigenvector problem is not convenient, since it's not easy to compute incrementally and more importantly it's not a constrained low rank approximation problem. Therefore, we transform it to a SVD problem by setting $\mathbf{z}=\mathbf{D}^{1/2}\mathbf{x}$ in equation \ref{eq:eigen_origin}, then we have
$$
(\mathbf{I}-\mathbf{D^{-1/2}WD^{-1/2}})\mathbf{z}=\lambda \mathbf{z}.
$$
Removing non-relevant terms, we actually want to find the top eigenvectors of $\mathbf{D^{-1/2}WD^{-1/2}}$ which we denote as $\boldsymbol{\mathcal{L}}$. $\boldsymbol{\mathcal{L}}$ is actually called normalized random walk matrix. If we have a matrix sketch $\mathbf{M}\in \mathbb{R}^{n\times d}$ of $\boldsymbol{\mathcal{L}}$, one can use top singular vectors of $\mathbf{M}$ to approximate top eigenvectors of $\boldsymbol{\mathcal{L}}$ since $\mathcal{L}$ is a symmetric matrix. 
\begin{comment}
To see this, we take a look at the eigenvector decomposition in the matrix form 
$$\boldsymbol{\mathcal{L}}= \mathbf{Z}\mathbf{\Lambda}\mathbf{Z^T}.
$$
Since $\mathcal{L}=\mathbf{D^{-1/2}WD^{-1/2}}$ is a symmetric matrix, if $\mathbf{MM^T}$ approximate $\boldsymbol{\mathcal{L}\mathcal{L}^T}$ well (actually this is guaranteed by rank $k$ projection-cost preservation sketch), then
$
\mathbf{MM^T}\approx \boldsymbol{\mathcal{L}\mathcal{L}^T}=\boldsymbol{\mathcal{L}}^2= \mathbf{Z}\mathbf{\Lambda}^2 \mathbf{Z^T}.
$
\end{comment}

Next, we will decouple the nodes' relations in the graph in terms of $\boldsymbol{\mathcal{L}}$ with preserving projection-cost. One can regard each node as a row vector in $\boldsymbol{\mathcal{L}}$, which is generally sparse. To remove the connection between nodes, we randomly project their vectors onto orthogonal directions $d$ times so that nodes' vectors are mapped into that orthogonal subspace. More precisely, for a node vector $\mathbf{v}$, we choose a map $g(\mathbf{v})=\frac{1}{\sqrt{d}}\mathbf{Rv}$, where $\mathbf{R}\in \mathbb{R}^{d\times n}$ and $\mathbf{RR}^T$ is an orthogonal projection that maps vectors into a uniformly random subspace of dimension $d$. This strategy should work fine if we have this random matrix $\mathbf{R}$. However, ensuring the orthogonality of the projection matrix $\mathbf{RR}^T$ takes unnecessary time. We can achieve the same goal without explicitly orthonormalize the projection matrix. We choose $\mathbf{R}$ to be a Johnson-Lindenstrauss matrix, that is, the entries of $\mathbf{R}$ are independently and uniformly drawn from Gaussian distribution $N(0,1)$. In this way, although the eigenvalues of $\mathbf{RR}^T$ are not confined in $\{0,1\}$, the range of $\mathbf{RR}^T$ is indeed a uniformly random subspace. In a matrix form, it means that
\begin{equation}
\mathbf{M}=g(\boldsymbol{\mathcal{L}})=\frac{1}{\sqrt{d}}\boldsymbol{\mathcal{L}}\mathbf{R^T}.
\end{equation}
Now, we have got a matrix sketch by Johnson-Lindenstrass random projection. The following lemma indicates that the matrix sketch $\mathbf{M}$ generated by this procedure is indeed a projection-cost preserving sketch.
\begin{lemma}
\label{lemma:random_projection}
For matrix $\mathbf{A}\in \mathbb{R}^{n\times n}$, let  $\mathbf{R}\in \mathbb{R}^{d\times n}$ be a Johnson-Lindenstrauss matrix with each entry chosen independently and uniformly from Gaussian distribution $N(0,1)$. For $\epsilon>0$, with probability at least $1-2/n$, $\frac{1}{\sqrt{d}}\mathbf{A}\mathbf{R}^T$ is a rank $k$ projection-cost preserving sketch of $\mathbf{A}$ with approximation ratio $\epsilon$, when $d\geq k/\epsilon^2$.
\end{lemma}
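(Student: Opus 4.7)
The plan is to follow the standard analysis for Johnson--Lindenstrauss projection-cost preserving sketches, splitting $\mathbf{A}$ into a ``head'' and ``tail'' via SVD and handling each piece separately. Write $\mathbf{A}=\mathbf{A}_k+\mathbf{A}_{r\setminus k}$, where $\mathbf{A}_k$ is the best rank-$k$ approximation of $\mathbf{A}$, and set $\tilde{\mathbf{A}}=\frac{1}{\sqrt{d}}\mathbf{A}\mathbf{R}^T$, $\tilde{\mathbf{A}}_k=\frac{1}{\sqrt{d}}\mathbf{A}_k\mathbf{R}^T$, $\tilde{\mathbf{A}}_{r\setminus k}=\frac{1}{\sqrt{d}}\mathbf{A}_{r\setminus k}\mathbf{R}^T$. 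Expanding $\|\tilde{\mathbf{A}}-\mathbf{P}\tilde{\mathbf{A}}\|_F^2$ produces a head term $\|(\mathbf{I}-\mathbf{P})\tilde{\mathbf{A}}_k\|_F^2$, a tail term $\|(\mathbf{I}-\mathbf{P})\tilde{\mathbf{A}}_{r\setminus k}\|_F^2$, and a cross term $2\,\mathrm{tr}\bigl(\tilde{\mathbf{A}}_k^T(\mathbf{I}-\mathbf{P})\tilde{\mathbf{A}}_{r\setminus k}\bigr)$. The corresponding decomposition of $\|\mathbf{A}-\mathbf{P}\mathbf{A}\|_F^2$ has a vanishing cross term because $\mathbf{A}_k$ and $\mathbf{A}_{r\setminus k}$ have orthogonal row spaces, so the comparison reduces to controlling each of the three sketch-side pieces separately and then defining the constant $c$ so that the $\mathbf{P}$-independent discrepancy on the tail is absorbed.

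For the head, I would invoke the oblivious subspace embedding property of a $d\times n$ scaled Gaussian matrix with $d\gtrsim k/\epsilon^2$: restricted to the at-most-$k$-dimensional row space of $\mathbf{A}_k$, the map $\mathbf{y}\mapsto \mathbf{R}\mathbf{y}/\sqrt{d}$ preserves norms up to $(1\pm\epsilon)$ uniformly, which follows from a standard $\epsilon$-net argument combined with $\chi^2$ tail bounds. This immediately gives $\|(\mathbf{I}-\mathbf{P})\tilde{\mathbf{A}}_k\|_F^2=(1\pm\epsilon)\|(\mathbf{I}-\mathbf{P})\mathbf{A}_k\|_F^2$. For the tail, I would apply the approximate-matrix-multiplication bound for Gaussian sketches to show $\bigl\|\tilde{\mathbf{A}}_{r\setminus k}\tilde{\mathbf{A}}_{r\setminus k}^T-\mathbf{A}_{r\setminus k}\mathbf{A}_{r\setminus k}^T\bigr\|_F\le \tfrac{\epsilon}{\sqrt{k}}\|\mathbf{A}_{r\setminus k}\|_F^2$, and then take the inner product with the rank-$k$ projector $\mathbf{I}-\mathbf{P}$ to get an error of order $\epsilon\|\mathbf{A}_{r\setminus k}\|_F^2\le\epsilon\|\mathbf{A}-\mathbf{P}\mathbf{A}\|_F^2$. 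Choosing $c:=\|\mathbf{A}_{r\setminus k}\|_F^2-\|\tilde{\mathbf{A}}_{r\setminus k}\|_F^2$ is $\mathbf{P}$-free and cancels the leading tail discrepancy.

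The cross term is where I expect the main difficulty. A naive Cauchy--Schwarz bound would lose a factor of $\sqrt{k}$; the correct treatment exploits the fact that $(\mathbf{I}-\mathbf{P})\mathbf{A}_k$ has rank at most $k$, so approximate matrix multiplication gives $\bigl\|\tilde{\mathbf{A}}_k^T\tilde{\mathbf{A}}_{r\setminus k}-\mathbf{A}_k^T\mathbf{A}_{r\setminus k}\bigr\|_F\le \tfrac{\epsilon}{\sqrt{k}}\|\mathbf{A}_k\|_F\|\mathbf{A}_{r\setminus k}\|_F$, and then using $\|\mathbf{A}_k\|_F^2\le k\|\mathbf{A}_k\|_2^2\le k\|\mathbf{A}-\mathbf{P}^*\mathbf{A}\|_F^2/k=\|\mathbf{A}_{r\setminus k}\|_F^2$-style slack together with $\mathbf{A}_k^T\mathbf{A}_{r\setminus k}=\mathbf{0}$ leaves an error bounded by $\epsilon\|\mathbf{A}-\mathbf{P}\mathbf{A}\|_F^2$. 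Assembling the three bounds by the triangle inequality gives the two-sided projection-cost guarantee of Definition \ref{def:proj-cost preservation}.

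Finally, I would take a union bound over the three failure events (subspace embedding on the $k$-dimensional head, approximate multiplication for the tail Gram matrix, and approximate multiplication for the cross product), calibrating the constant hidden inside $d\ge k/\epsilon^2$ so that each fails with probability at most $2/(3n)$, yielding the stated success probability at least $1-2/n$. The delicate point is the bookkeeping to ensure every error is comparable to $\epsilon\|\mathbf{A}-\mathbf{P}\mathbf{A}\|_F^2$ rather than a weaker norm; once that accounting is in place, the rest is direct Gaussian concentration.
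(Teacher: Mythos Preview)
The paper does not actually prove Lemma~\ref{lemma:random_projection}; it is stated without proof as a known result, effectively imported from \cite{cohen2015dimensionality}, which the paper already cites for Definition~\ref{def:proj-cost preservation}. Your head/tail/cross decomposition is exactly the argument used in that reference, so at the level of strategy you are reproducing what the paper relies on.

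Two places in your sketch do not go through as written, though both are repairable. First, $\mathbf{I}-\mathbf{P}$ has rank $n-k$, not $k$; the tail step works by writing $\mathrm{tr}\bigl((\mathbf{I}-\mathbf{P})\mathbf{E}\bigr)=\mathrm{tr}(\mathbf{E})-\mathrm{tr}(\mathbf{P}\mathbf{E})$, absorbing the $\mathbf{P}$-independent piece $\mathrm{tr}(\mathbf{E})$ into $c$, and then bounding $|\mathrm{tr}(\mathbf{P}\mathbf{E})|\le\sqrt{k}\,\|\mathbf{E}\|_F$ using the rank of $\mathbf{P}$. Second, and more seriously, your inequality chain $\|\mathbf{A}_k\|_F^2\le k\|\mathbf{A}_k\|_2^2\le \|\mathbf{A}_{r\setminus k}\|_F^2$ is false in general (there is no reason $\sigma_1^2\le\tfrac{1}{k}\sum_{i>k}\sigma_i^2$), so the cross-term bound you state does not follow. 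The correct treatment in \cite{cohen2015dimensionality} factors $\mathbf{A}_k=\mathbf{U}_k\mathbf{\Sigma}_k\mathbf{V}_k^T$, applies approximate matrix multiplication to $\mathbf{V}_k^T$ against $\mathbf{A}_{r\setminus k}^T$ (exploiting $\|\mathbf{V}_k\|_F=\sqrt{k}$, which is where the $\sqrt{k}$ you need actually comes from), and then pairs the remaining $(\mathbf{I}-\mathbf{P})\mathbf{U}_k\mathbf{\Sigma}_k$ factor with $\|(\mathbf{I}-\mathbf{P})\mathbf{A}_k\|_F$ via Cauchy--Schwarz and AM--GM so that the final bound is against $\|(\mathbf{I}-\mathbf{P})\mathbf{A}\|_F^2$ rather than $\|\mathbf{A}_k\|_F\|\mathbf{A}_{r\setminus k}\|_F$. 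With those two corrections your outline becomes the standard proof.
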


%Due to space limit, we omit the proof here. 
By lemma \ref{lemma:random_projection}, one can achieve an accurate sketch with a satisfied approximation ratio by increasing the sketch size $d$. We will talk about how to choose this parameter later.

%To utilize the fact that the projection-cost preservation technique can benefit the constrained low rank problem, we first decouple the nodes' relations with projection-cost preservation and then use a dimension reduction technique that is essentially a constrained low rank approximation to further learn the graph representation as low-dimension vectors.

\subsection{Feature Extraction by Low Rank Approximation}
After obtaining the projection-cost preserving sketch $\mathbf{M}$ of $\boldsymbol{\mathcal{L}}$, we want to extract meaningful information from the sketch $\mathbf{M}$ and also further reduce its dimension. Singular value decomposition is a good choice, since it is a constrained low rank approximation problem which is suitable and advantageous for further factorizing projection-cost preserving sketch, and it is easy to adapt to unseen nodes. A partial singular value decomposition over matrix $\mathbf{M}$ will give
\begin{equation}
\mathbf{M_k}=\mathbf{\tilde{U}_k}\mathbf{\tilde{\Sigma}_k}\mathbf{\tilde{V}}^T_k.
\end{equation}
Each row of $\mathbf{D}^{-\frac{1}{2}}\mathbf{\tilde{U}_k}$ is the learned graph representation for the corresponding node in the graph. Further, to demonstrate the effectiveness of this framework to learn accurate low-dimensional vectors and to facilitate important tasks (e.g., $k$-means clustering), we present the following theorem.
\begin{theorem}
\label{theorem:difference}
Let $\mathbf{MM}^T = \mathbf{\mathcal{LL}}^T+\mathbf{\Delta}$. And $\mathbf{\mathcal{L}}$ has a singular value decomposition as $ \mathbf{\mathcal{L}} = \mathbf{U\Sigma V^T}$ with distinct singular values. Let $\eta=\frac{||\mathbf{\Delta}||_F}{\rho}$ and $\alpha=\frac{1+\sqrt{1-\frac{1}{n}}}{\rho}$. If $\eta\leq \frac{1}{2\alpha+\sqrt{1+4\alpha^2}}$, then $\mathbf{M}$ has the singular value decomposition $\mathbf{M}=\tilde{\mathbf{U}}\mathbf{\tilde{\Sigma}}\tilde{\mathbf{V}}^T$, such that
$
||\tilde{\mathbf{U}}-\mathbf{U}||_F\leq \gamma
$
where $\gamma = \frac{\sqrt{2}\eta}{\sqrt{1-2\alpha\eta+\sqrt{1-4\alpha\eta-\eta^2}}}$ and $\rho=\underset{1\leq i\neq j\leq n}{\operatorname{min}} |\sigma_i-\sigma_j|$.
\noindent
Furthermore, $k$-means clustering over $\tilde{\mathbf{U}}$ will give a good approximate to that over $\mathbf{U}$ in terms of $k$-means centroids.
\end{theorem}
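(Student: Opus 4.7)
The plan is to treat the bound on $||\tilde{\mathbf{U}}-\mathbf{U}||_F$ as a Gram-matrix eigenvector perturbation estimate, and then parlay it into a centroid bound for $k$-means. Since $\boldsymbol{\mathcal{L}}=\mathbf{U\Sigma V}^T$, we have $\boldsymbol{\mathcal{L}\mathcal{L}}^T=\mathbf{U\Sigma}^2\mathbf{U}^T$, so the columns of $\mathbf{U}$ are the eigenvectors of $\boldsymbol{\mathcal{L}\mathcal{L}}^T$; likewise the columns of $\tilde{\mathbf{U}}$ diagonalize $\mathbf{MM}^T=\boldsymbol{\mathcal{L}\mathcal{L}}^T+\boldsymbol{\Delta}$. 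Because $\boldsymbol{\mathcal{L}}$ has distinct singular values, $\boldsymbol{\mathcal{L}\mathcal{L}}^T$ has distinct eigenvalues, so each $\mathbf{u}_i$ is well defined up to sign and the comparison with a specific $\tilde{\mathbf{u}}_i$ is meaningful. The task is then to bound the eigenvector perturbation of $\boldsymbol{\mathcal{L}\mathcal{L}}^T$ under the additive noise $\boldsymbol{\Delta}$.

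For each column I would expand $\tilde{\mathbf{u}}_i$ in the basis $\{\mathbf{u}_j\}$ and isolate the standard first-order correction,
$$
\tilde{\mathbf{u}}_i-\mathbf{u}_i \;=\; \sum_{j\neq i}\frac{\mathbf{u}_j^T\boldsymbol{\Delta}\mathbf{u}_i}{\sigma_i^2-\sigma_j^2}\,\mathbf{u}_j \;+\; R_i,
$$
where $R_i$ lumps the higher-order terms of the Neumann-type series for the perturbed eigenvector. Each first-order coefficient is controlled by $||\boldsymbol{\Delta}||_F/\rho=\eta$. The parameter $\alpha=(1+\sqrt{1-1/n})/\rho$ encodes a uniform bound on how the squared singular values of $\boldsymbol{\mathcal{L}}$ (which lie in a controlled range since $\boldsymbol{\mathcal{L}}$ is symmetric with spectral radius at most one) weight the cross terms between different eigendirections. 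A self-consistent estimate of the form $||R_i||_2\leq f(||\tilde{\mathbf{u}}_i-\mathbf{u}_i||_2)$ leads to a quadratic inequality in $||\tilde{\mathbf{u}}_i-\mathbf{u}_i||_2$; the condition $\eta\leq 1/(2\alpha+\sqrt{1+4\alpha^2})$ is precisely what guarantees non-negativity of the discriminant $1-4\alpha\eta-\eta^2$, so that solving the inequality yields the stated $\gamma$. Squaring, summing over $i$, and applying Cauchy-Schwarz gives $||\tilde{\mathbf{U}}-\mathbf{U}||_F\leq\gamma$.

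For the $k$-means consequence, fix a clustering $\{C_1,\dots,C_k\}$ produced from the rows of $\mathbf{U}$ and let $\boldsymbol{\mu}_j,\tilde{\boldsymbol{\mu}}_j$ be the centroids of $C_j$ computed from rows of $\mathbf{U}$ and $\tilde{\mathbf{U}}$ respectively. Then $\tilde{\boldsymbol{\mu}}_j-\boldsymbol{\mu}_j=\frac{1}{|C_j|}\sum_{i\in C_j}(\tilde{\mathbf{u}}_i-\mathbf{u}_i)$, and Cauchy-Schwarz yields $||\tilde{\boldsymbol{\mu}}_j-\boldsymbol{\mu}_j||_2^2\leq\frac{1}{|C_j|}\sum_{i\in C_j}||\tilde{\mathbf{u}}_i-\mathbf{u}_i||_2^2$; summing over $j$ gives $\sum_j|C_j|\,||\tilde{\boldsymbol{\mu}}_j-\boldsymbol{\mu}_j||_2^2\leq||\tilde{\mathbf{U}}-\mathbf{U}||_F^2\leq\gamma^2$. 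The hard part will be producing the \emph{exact} closed form of $\gamma$: a plain Davis--Kahan estimate already yields a linear $O(\eta)$ bound, but the stated $\gamma$ has the structure of a root of a quadratic, so the derivation must set up a contraction (fixed-point) argument for the perturbed eigenvector rather than a one-shot truncation of the series, and must carefully track how the spectral bound $\alpha$ couples with the gap $\rho$ in the cross terms — this coupling is what introduces the unusual $\sqrt{1-1/n}$ factor in $\alpha$ and in turn shapes the threshold on $\eta$.
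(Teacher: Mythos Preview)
Your proposal is correct and, for the first claim, follows the same line as the paper: both treat $\boldsymbol{\Delta}$ as a symmetric perturbation of $\boldsymbol{\mathcal{L}\mathcal{L}}^T$ and invoke the absolute eigenvector perturbation bound from perturbation theory (the paper simply cites this result, whereas you sketch the Neumann/fixed-point derivation that produces the quadratic in $\eta$ and hence the exact form of $\gamma$).

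For the $k$-means consequence you take a slightly different but equivalent route. The paper packages the centroid map as the cluster-indicator projection $\mathbf{XX}^T$ (row $i$ of $\mathbf{XX}^T\mathbf{U}$ is the centroid of $i$'s cluster) and then applies norm submultiplicativity,
\[
\|\mathbf{XX}^T(\tilde{\mathbf{U}}-\mathbf{U})\|_F\;\leq\;\|\mathbf{XX}^T\|_2\,\|\tilde{\mathbf{U}}-\mathbf{U}\|_F\;\leq\;\gamma,
\]
using that $\|\mathbf{XX}^T\|_2\leq 1$ for an orthogonal projection. Your explicit centroid computation plus Cauchy--Schwarz arrives at exactly the same inequality, since $\|\mathbf{XX}^T(\tilde{\mathbf{U}}-\mathbf{U})\|_F^2=\sum_j|C_j|\,\|\tilde{\boldsymbol{\mu}}_j-\boldsymbol{\mu}_j\|_2^2$. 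The paper's matrix formulation is terser and ties back to the constrained-low-rank-approximation framing used throughout; your version is more elementary and makes the per-cluster centroid shift visible.
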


\begin{proof}
For the first part of the proof, since  $\mathbf{MM}^T$ and $\boldsymbol{\mathcal{LL}}^T$ are symmetric matrices, we can view $\mathbf{\Delta}$ as a symmetric perturbation matrix. Deriving the upper bound of $||\tilde{\mathbf{U}}-\mathbf{U}||_F $ is similar to the derivation of the absolute perturbation bound for eigenvector decomposition in perturbation theory \cite{chen2012perturbation}.

The key of the second part is to regard $k$-means clustering as a constrained low rank approximation problem \cite{boutsidis2009unsupervised}. As defined in Section \ref{sec:clra}, $\mathbf{XX}^T$ is a projection matrix projecting the points vector into its cluster centroid and $\mathbf{X}$ is the cluster indicator matrix. The discrepancy between the corresponding cluster position assigned to each node of two matrix $\tilde{\mathbf{U}}$ and $\mathbf{U}$ is $||\mathbf{XX}^{T}(\tilde{\mathbf{U}}-\mathbf{U})||_F$. Applying the spectral submultiplicativity property, it yields
$$
||\mathbf{XX}^{T}(\tilde{\mathbf{U}}-\mathbf{U})||_F\leq ||\tilde{\mathbf{U}}-\mathbf{U})||_F^2 ||\mathbf{XX}^{T}||_2.
$$
Since $\mathbf{XX}^{T}$ is a symmetric projection matrix, the spectral norm of $\mathbf{XX}^{T}$ is not greater than 1. Then, it becomes $$
||\mathbf{XX}^{T}(\tilde{\mathbf{U}}-\mathbf{U})||_F\leq \gamma.
$$
It means that $\mathbf{\tilde{U}}$ approximates nodes' representations in $\mathbf{U}$ well and the assigned clusters for nodes in $\mathbf{U}$ and $\mathbf{\tilde{U}}$ are well-matched. 
\end{proof}

Generally, this theorem states that the obtained representations of our algorithm will give a strong guarantee if we perform $k$-means clustering on them. In the experiment, we demonstrate that FI-GRL also achieve excellent results on clustering task using agglomerative method (AM) and on structural hole detection task.
\subsection{Inductive Learning and Entire Framework of FI-GRL}
\label{subsec:entire_framework}
So far, we've presented the graph representation learning framework for static graphs. When considering inductive learning on an unseen node $v_i$, we first get $\boldsymbol{\mathcal{L}}_i$ of the corresponding column vector of the normalized random walk matrix of the extended graph after $v_i$ added. The valid dimension of $\boldsymbol{\mathcal{L}}_i$ is at most $n$ since self-join is prohibited. Then, applying the random projection matrix $\mathbf{R}$ on $\boldsymbol{\mathcal{L}}_i$, we get a compressed vector $\mathbf{b}=\frac{1}{\sqrt{d}}\mathbf{R}\boldsymbol{\mathcal{L}}_i$, where $d$ is sketch size as above. Since the partial SVD on matrix $\mathbf{M}$ is $\mathbf{M_k}=\mathbf{\tilde{U}_k}\mathbf{\tilde{\Sigma}_k}\mathbf{\tilde{V}^T_k}$, we can regard row vectors of $\mathbf{\tilde{U}_k}$ as vectors in the span of $\mathbf{\tilde{V}^T_k}$. So we can project $\mathbf{b}$ onto the span of $\mathbf{\tilde{V}^T_k}$. Specifically,
\begin{equation}
\mathbf{\hat{b}}=\mathbf{b}\mathbf{\tilde{V}_k}\mathbf{\tilde{\Sigma}_k^{-1}}.
\end{equation}
Then, degree-normalized $\mathbf{\hat{b}}$, namely $1/\sqrt{\mathbf{D}_{jj}}\mathbf{\hat{b}}$, is the obtained representation of node $v_i$. This method is fast and powerful when the graph is stable and gradually changes. We testify the effectiveness of our method at different proportions of unseen nodes in the experiment. Overall, our FI-GRL framework is summarized in Algorithm \ref{alg:fs-grl}.
\begin{comment}
\begin{figure}[htbp]
\centering
\includegraphics[width=\linewidth]{framework.eps}
\caption{An embedding visualization of our method NOBE on \emph{karate} network. (a) The distance between the embedding vectors is plotted, where x and y axes represent node ID respectively. Community structure, structural holes (yellow frame) and outliers (green frame) are easily identified; (b) Ground truth communities are rendered in different colors, which is well preserved in the embedding subspace. Again, structural holes and outliers are marked with yellow and green circles respectively.}
\label{fig:framework}
\end{figure}
\end{comment}
\begin{algorithm}[htbp]
\small
\caption{\textbf{FI-GRL}: Fast Inductive Graph Representation Learning }
\label{alg:fs-grl}
\begin{algorithmic}[1] %这个1 表示每一行都显示数字
\REQUIRE ~Graph $\mathcal{G}=(V,E,\mathbf{W})$ with totally $n$ nodes; Unseen node set $\{v_i\}$;Dimension $k$, approximation ratio $\epsilon$
\ENSURE Low-dimensional vectors $(\mathbf{y}_1,\mathbf{y}_2,\cdots,\mathbf{y}_n,\cdots)$

\STATE Construct matrix $\boldsymbol{\mathcal{L}}=\mathbf{D^{-1/2}WD^{-1/2}}$ for $\mathcal{G}_1$
\STATE Construct a $d\times n$ matrix $\mathbf{R}$, whose entries are independently drawn from $N(0,1)$, where $d$ is $max\{4log(n)/\epsilon ^2,k/\epsilon^2\}$
\STATE Compute each row of the matrix sketch $\mathbf{M}$, $\mathbf{M}_i=\frac{1}{\sqrt{d}}\mathbf{R}\boldsymbol{\mathcal{L}}_i$, where $\boldsymbol{\mathcal{L}}_i$ denotes $i$th row of $\boldsymbol{\mathcal{L}}$
\STATE Compute $k$-singular value decomposition $\mathbf{M_k}=\mathbf{\tilde{U}_k}\mathbf{\tilde{\Sigma}_k}\mathbf{\tilde{V}}^T_k$
\STATE Compute $\mathbf{Y}=\mathbf{D}^{-\frac{1}{2}}\mathbf{\tilde{U}_k}$
\FORALL{unseen nodes $v_j$}
\STATE Compute $\mathbf{b}=\frac{1}{\sqrt{d}}\mathbf{R}\boldsymbol{\mathcal{L}}_j$
\STATE Compute $\mathbf{\hat{b}}=1/\sqrt{\mathbf{D}_{jj}}\mathbf{b}\mathbf{\tilde{V}_k}\mathbf{\tilde{\Sigma}_k^{-1}}$ 

\STATE Append $\mathbf{\hat{b}}$ as a new row of $\mathbf{Y}$
\ENDFOR
\RETURN $\mathbf{Y}$
\end{algorithmic}
\end{algorithm} 
\subsection{Parameter Analysis and Complexity Analysis}
\label{sec:parameter_complex}
The parameter of the sketch size $d$ can actually be determined by the approximation ratio $\epsilon$. The approximation ratio $\epsilon$ can be designated for the requirement of different tasks. $\epsilon=0.1$ will be sufficient for most tasks focusing on mining the macroscopic structure of the graph (e.g., clustering). One can always decrease $\epsilon$ to achieve better accuracy if computational power is allowed. Johnson-Lindenstrass lemma will provide us another perspective to determine the sketch size $d$.
\begin{lemma}
\label{lemma:JL}
Let $\mathbf{x}_1,\cdots,\mathbf{x}_n\in \mathbb{R}^t$ be arbitrary. Pick any $\epsilon \in (0,1)$ and matrix $\mathbf{R}\in \mathbb{R}^{d\times t}$ as a Johnson-Lindenstrauss random projection matrix whose entries are independently and uniformly drawn from Gaussian distribution $N(0,1)$. Then, for $d=O(log(n)/ \epsilon^2)$, define $\mathbf{y}_i=\mathbf{R}\mathbf{x}_j/\sqrt{d}$ for $i=1,\cdots,n$, then for any $j,j'$ the following equations hold with probability $1-2/n:$
\begin{equation}
\begin{aligned}
(1-\epsilon)||\mathbf{x}_j||\leq ||&\mathbf{y}_j||\leq (1+\epsilon)||\mathbf{x}_j||,\\
(1-\epsilon)||\mathbf{x}_j-\mathbf{x}_{j'}||\leq ||\mathbf{y}_j & -\mathbf{y}_{j'}||\leq (1+\epsilon)||\mathbf{x}_j-\mathbf{x}_{j'}||.
\end{aligned}
\end{equation}
\end{lemma}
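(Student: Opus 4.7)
The plan is to follow the standard Johnson-Lindenstrauss route: reduce the claim to a tail bound on a single Gaussian projection, and then discharge the probability by a union bound over the relevant pairs. First I would observe that by linearity $\mathbf{y}_j - \mathbf{y}_{j'} = \mathbf{R}(\mathbf{x}_j-\mathbf{x}_{j'})/\sqrt{d}$, so the pairwise bound is the norm bound applied to the $\binom{n}{2}$ difference vectors $\mathbf{x}_j - \mathbf{x}_{j'}$, and the two inequalities can be proved simultaneously by treating at most $n + \binom{n}{2} = O(n^2)$ fixed vectors in $\mathbb{R}^t$.

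Next I would fix an arbitrary nonzero $\mathbf{v}\in\mathbb{R}^t$ and analyze $\|\mathbf{R}\mathbf{v}\|^2/(d\|\mathbf{v}\|^2)$. Each coordinate of $\mathbf{R}\mathbf{v}/\|\mathbf{v}\|$ is a linear combination of independent standard Gaussians with unit-norm coefficient vector, hence itself $N(0,1)$, and the $d$ coordinates are independent. Thus $Z := \|\mathbf{R}\mathbf{v}\|^2/\|\mathbf{v}\|^2$ is $\chi^2_d$-distributed, with $\mathbb{E}[Z/d]=1$. The key step is a Chernoff bound on the chi-squared distribution: writing $\mathbb{P}[Z/d \ge 1+\epsilon] \le e^{-t(1+\epsilon)d}\mathbb{E}[e^{tZ}]$ for $t\in(0,1/2)$ and optimizing $t$ gives a two-sided inequality of the form
\begin{equation}
\mathbb{P}\!\left[\left|\frac{\|\mathbf{R}\mathbf{v}\|^2}{d\|\mathbf{v}\|^2}-1\right|>\epsilon\right]\le 2\exp\!\left(-c\,\epsilon^2 d\right)
\end{equation}
for an absolute constant $c>0$ (the usual value is $c=1/4$ after cleaning up, valid for $\epsilon\in(0,1)$). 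Passing from squared norms to norms only costs a constant: $(1-\epsilon)^2 \le Z/d \le (1+\epsilon)^2$ implies the desired $(1\pm\epsilon)$ bound, so absorbing the factor into the constant is harmless.

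Finally I would apply a union bound. Choosing $d \ge C\log(n)/\epsilon^2$ for a sufficiently large constant $C$ (so that $2\exp(-c\epsilon^2 d) \le 2/n^3$, say), the probability that any of the $O(n^2)$ events fails is at most $2/n$. Since the event set already includes every $\mathbf{x}_j$ itself and every difference $\mathbf{x}_j-\mathbf{x}_{j'}$, both inequalities of the lemma hold simultaneously for all $j,j'$ with probability $\ge 1 - 2/n$.

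I expect the main technical obstacle to be the Chernoff calculation for the chi-squared tail, because it is the only non-formal step: one must compute the moment generating function $\mathbb{E}[e^{tZ}] = (1-2t)^{-d/2}$, optimize $t$, and then bound the resulting exponent by $-c\epsilon^2 d$ uniformly in $\epsilon\in(0,1)$, handling the upper and lower tails separately. Everything else — the reduction to a single vector, the conversion between squared and unsquared norms, and the union bound — is essentially bookkeeping.
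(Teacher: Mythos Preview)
Your proposal is correct and is the standard proof of the Johnson--Lindenstrauss lemma via chi-squared concentration and a union bound. However, the paper does not actually prove this lemma: it is stated as a classical result and attributed to the literature (with a pointer to \cite{dasgupta2000experiments} for the constant), and the authors immediately use it to guide the choice of the sketch size $d$. So there is no ``paper's own proof'' to compare against; you have supplied the argument that the paper omits, and your outline---reduction to a single vector by linearity, $\chi^2_d$ distribution of $\|\mathbf{R}\mathbf{v}\|^2/\|\mathbf{v}\|^2$, Chernoff/MGF tail bound, and a union bound over the $O(n^2)$ vectors---is exactly the textbook route and goes through without difficulty.
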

Taking each row of our matrix $\boldsymbol{\mathcal{L}}$ as $\mathbf{x}_i$, Lemma \ref{lemma:JL} proves that the norm of vectors and the distance between nodes are preserved in the low-dimensional subspace when $d=O(log(n)/ \epsilon^2)$. In fact, $d=\delta log(n)/\epsilon^2$ \cite{dasgupta2000experiments}, where $\delta\leq 4$. When approximation ratio is known, we choose the $max\{4log(n)/\epsilon^2,k/\epsilon^2\}$ as the sketch size $d$.

To analyze the computational complexity of our algorithm, we first note that our algorithm is especially efficient since matrix $\boldsymbol{\mathcal{L}}$ is very sparse when the graph is large, and the matrix-vector product $\mathbf{R}\boldsymbol{\mathcal{L}}_i$ can be evaluated rapidly. The computational cost of our algorithm in static settings are $O(dn+dD+d^2n)$ in total, where $D$ is the total degree of the graph. $O(dn)$ is the cost of generating the Johnson-Lindenstrass random projection matrix. $O(dD)$ is the cost of computing the projection-cost preserving sketch. $O(d^2n)$ is for computing the partial singular value decomposition. For unseen nodes, to learn a node representation of an unseen node $v_i$, we need $O(d\mathbf{D}_{ii}+dk)$ time where $\mathbf{D}_{ii}$ is the degree of $v_i$. $O(d\mathbf{D}_{ii})$ is the time of projecting the node into $d$-dimensional space and $O(dk)$ is the cost of folding-in to the span of right singular vectors.
\subsection{Discussion}
%Due to Theorem \ref{theorem:difference}, the algorithm may degrade since the singular values of $\boldsymbol{\mathcal{L}}$ decay slowly. Classic technique to deal with the situation of slowly decaying singular values is to multiply the $\boldsymbol{\mathcal{L}}$ $P$ times. Empirically, for graph representation learning purpose, this is not necessary. For graph with known topological preferences, the smoothness of the decreasing singular values means clusters on the same hierarchical level. The nodes in the learned subspace may be not very accurate in the same cluster since the singular vectors associated with small singular values interfere with the calculation of the singular vectors associate with dominate singular vectors. For graph representation learning purpose, this situation only happens when nodes are in the same cluster, which does not degenerate the performance of our algorithm much. 
Since matrix $\boldsymbol{\mathcal{L}}$ is symmetric, we are able to use double-sided random projection with projection-cost preserved and use eigenvector decomposition as a generalized constrained low rank approximation. To see this, we compute the projection-cost of a projection $\mathbf{QQ}^T$ on double sides of the symmetric matrix $\boldsymbol{\mathcal{L}}$, then it yields
\begin{equation}
\begin{footnotesize}
\begin{aligned}
&||\boldsymbol{\mathcal{L}}-\mathbf{QQ}^T\boldsymbol{\mathcal{L}}\mathbf{QQ}^T||_F^2=||\boldsymbol{\mathcal{L}}-\mathbf{QQ}^T\boldsymbol{\mathcal{L}}+\mathbf{QQ}^T\boldsymbol{\mathcal{L}}-\mathbf{QQ}^T\boldsymbol{\mathcal{L}}\mathbf{QQ}^T||_F^2\\
&\leq ||\boldsymbol{\mathcal{L}}-\mathbf{QQ}^T\boldsymbol{\mathcal{L}}||_F^2+||\boldsymbol{\mathcal{L}}-\boldsymbol{\mathcal{L}}\mathbf{QQ}^T||_F^2||\mathbf{QQ}^T||_2^2
\leq 2||\boldsymbol{\mathcal{L}}-\mathbf{QQ}^T\boldsymbol{\mathcal{L}}||_F^2.
\end{aligned}
\end{footnotesize}
\end{equation}
\normalsize
The lemmas and theorems in this paper can be devised correspondingly. However, this approach needs an additional computation of the product of the top k eigenvectors of the matrix sketch and the random projection matrix. Therefore, double-sided projection-cost preservation is not necessary and regular projection-cost preservation is sufficient for graph representation learning purpose.

%Johnson-Lindenstrauss random projection matrix can be replaced by other structured or sparse random matrix, such as randomized Hadamard matrix \cite{} and subsampled random Fourier transform matrix \cite{}, to gain more computational benefits with a little number of oversampling the dimension $d$. These realizations needs more adjustment for the streaming scenario, while the dense Johnson-Lindenstrauss random projection matrix is fast enough for our purpose and easy to implement.

\section{Experimental Results}
\label{sec:results}
To quantitatively testify our FI-GRL framework, we perform various experiments using the learned graph representations. The implementation of our algorithm is publicly available\footnote{https://github.com/Jafree/FastInductiveGraphEmbedding}.
%To quantitatively testify our FI-GRL framework, we perform various experiments using the learned graph representations. First, we execute two tasks on macroscopic (clustering) and microscopic levels (structural hole detection). Then, we analyze the performance of our algorithm on streaming scenario. Further, the preferable count for the dimension of representations are investigated. At last, we experimentally verify the effectiveness of our algorithm on preserving projection-cost. Extensive experiments demonstrates the efficacy and efficiency of our algorithm.
\subsection{Datasets Description and Comparison Methods}
All datasets used in this paper are undirected graphs, which are available in SNAP Datasets platform \cite{snapnets}. These networks vary widely from network type, network scale, edge density, connecting patterns and cluster profiles, which contain three social networks: karate (real), youtube (online), enron-email (communication); three collaboration networks: ca-hepth, dblp, ca-condmat; three entity networks: dolphins (animals), us-football (organizations), polblogs (hyperlinks). To show the characteristics of these datasets, we use a community detection algorithm, RankCom \cite{jiang2015fast}, designed for graphs to reveal the cluster profiles (cluster numbers and max size of clusters). The detailed information is summarized in Table \ref{tab:datasets}.
\begin{table}[htbp] 
\scriptsize
\caption{\bf Summary of datasets and their cluster profiles.}
\centering %set to center
\begin{tabular}{|C{1.55cm}|C{1cm}C{1cm}C{1cm}C{1.8cm}|}
\hline
\multirow{2}{*}{}&\multicolumn{2}{c}{\textbf{Characteristics}}&\textbf{\#Cluster}&\textbf{\#Max members}\\
\cline{2-5}Datasets&\# Node&\# Edge&\emph{RankCom}&\emph{RankCom}\\
\hline
\emph{karate}&34&78&2   & 18  \\

\emph{dolphins}&62&159&3   & 29 \\

\emph{us-football}&115&613&11   & 17  \\

\emph{polblogs}&1,224&19,090&7  & 675  \\

\emph{ca-hepth}&9,877&25,998&995  & 446  \\

\emph{ca-condmat}&23,133&93,497&2,456  & 797  \\

\emph{email-enron}&36,692&183,831&3,888  & 3,914  \\

\emph{youtube}&334,863&925,872&15,863  & 37,255 \\

\emph{dblp}&317,080&1,049,866&25,633  & 1,099 \\

\hline
\end{tabular}
\label{tab:datasets}

\end{table}

Our framework is testified against state-of-the-art algorithms. The first five are graph representation learning methods. The others are structural hole detection methods . We summarize them as follows:
\begin{itemize}
\item \textbf{FI-GRL}: Our inductive representation learning approach.
\item \textbf{GraphSAGE} \cite{hamilton2017inductive}: Sampling and aggregating strategy is applied to integrate neighbors' information.
\item \textbf{Deepwalk} \cite{perozzi2014deepwalk}: Truncated random walk and language modeling techniques are adopted to learn representations.
\item \textbf{node2vec} \cite{grover2016node2vec}: Skip-gram framework is extended to networks.
\item \textbf{LINE} \cite{tang2015line}: The version of combining first-order and second-order proximity is used here. 

\item \textbf{HAM} \cite{he2016joint}: A harmonic modularity function is presented to tackle the structural hole detection problem.
%In this paper, we propose a novel harmonic modularity method to tackle both tasks simulta- neously. Specifically, we apply a harmonic function to mea- sure the smoothness of community structure and to obtain the community indicator. We then investigate the sparsity level of the interactions between communities, with particu- lar emphasis on the nodes connecting to multiple communi- ties, to discriminate the indicator of SH spanners and assist the community guidance.

\item \textbf{Constraint} \cite{burt2009structural}: Constructing a constraint to prune nodes without certain connectivity.

\item \textbf{Pagerank} \cite{page1999pagerank}: The assumption that structural holes are nodes with high pagerank score is adopted .

\item \textbf{Betweenness Centrality} (BC) \cite{brandes2001faster}: Nodes with highest BC will be selected as structural holes.

\item \textbf{HIS} \cite{lou2013mining}:Optimizing the provided objective function by  a two-stage information flow model .
%By analyzing some different types of networks, the definition of the structural hole spanners is formulated. The algorithm inspired by a two-stage information flow model is designed to optimize the provided objective function.
%The intuition of the first model can be also explained using the two-step information flow theory 

\item \textbf{AP\_BICC} \cite{rezvani2015identifying}: This method is designed by exploiting the approximate inverse closeness centralities.
\end{itemize}

\subsection{Clustering}
\begin{table*}[htbp]

\scriptsize
\caption{
\bf{Performance on Clustering evaluated by Modularity and Permanence$_{(rank)}$}}
\centering %set to center
\begin{tabular}{C{1cm}|C{1cm}||C{1cm}C{1cm}C{1cm}C{1cm}C{1cm}||C{1cm}C{1cm}C{1cm}C{1cm}C{1cm}}
\hline
\multirow{2}{*}{}&&\multicolumn{5}{c}{Modularity}&\multicolumn{5}{c}{Permanence}\\

\cline{3-12}Datasets& Clustering Methods&\textbf{FI-GRL}&\textbf{node2vec}&\textbf{GraphSAGE}&\textbf{LINE}&\textbf{Deepwalk}&\textbf{FI-GRL}&\textbf{node2vec}&\textbf{GraphSAGE}&\textbf{LINE}&\textbf{Deepwalk} \\
%\cline{3-12}
%& & \emph{modu}&\emph{perm}

\hline
\multirow{2}{*}{\emph{karate}}&\emph{k-means}& \textbf{0.410}(1)& 0.335(5)&0.381(4)&  0.403(2)&0.396(3)&  \textbf{0.474}(1)& 0.335(3)& 0.322(4)& 0.182(5)&0.350(2)  \\
&\emph{AM}& 0.410(2)&  0.335(4)&0.401(3)&  0.239(5)& \textbf{0.430}(1)& \textbf{0.474}(1)&  0.205(5)&0.339(2)&  0.232(4)& 0.311(3)\\
\cline{1-2}

\multirow{2}{*}{\emph{dolphins}}&\emph{k-means}&  \textbf{0.489}(1)& 0.460(2)&  0.370(4)&0.187(5)&0.401(3) &  \textbf{0.235}(1)& 0.196(2)&0.158(4)&  -0.166(5)&0.187(3)\\
&\emph{AM}& \textbf{ 0.462}(1)& 0.458(2)&  0.355(4)&0.271(5)& 0.393(3)&\textbf{0.215}(1)& 0.132(3)&0.121(4)&  -0.189(5)& 0.189(2)\\
\cline{1-2}

\multirow{2}{*}{\emph{us-football}}&\emph{k-means}& \textbf{0.607}(1)& 0.605(2)&0.485(4)& 0.562(3) & 0.464(5)& \textbf{0.323}(1) & 0.304(3)&0.124(4)& 0.311(2) & 0.039(5)\\
&\emph{AM}&  \textbf{0.611}(1)& 0.589(2)&0.470(4)& 0.492(3) & 0.464(5)& \textbf{0.315}(1)& 0.279(3)&0.116(4)& 0.307(2) & 0.039(5)\\
\cline{1-2}

\multirow{2}{*}{\emph{ca-hepth}}&\emph{k-means}&  \textbf{0.611}(1)& 0.597(2)&0.399(4)& 0.01(5)& 0.424(3)& \textbf{0.393}(1)& 0.379(2)&0.287(3)& -0.948(5) & 0.261(4)\\
&\emph{AM}&  \textbf{0.623}(1)& 0.606(2)&0.423(4)& 0.05(5)& 0.453(3)& \textbf{0.427}(1)& 0.406(2)&0.327(4)& -0.949(5) & 0.338(3)\\
\cline{1-2}

\multirow{2}{*}{\emph{condmat}}&\emph{k-means}& \textbf{0.527}(1)& 0.515(2) &0.409(3)&  0(5)& 0.357(4)& \textbf{0.371}(1)& 0.330(2)&0.206(3)&  -0.984(5)& 0.197(4)\\
&\emph{AM}& \textbf{0.544}(1)& 0.520(2)&0.427(3)& 0(5) & 0.370(4)& \textbf{0.392}(1)& 0.388(2)&0.213(4)& -0.994(5) & 0.249(3)\\
\cline{1-2}

\multirow{2}{*}{\emph{enron-email}}&\emph{k-means}&  \textbf{0.322}(1)& 0.213(3)&0.231(2)&  0(5)& 0.178(4)& \textbf{0.175}(1)& 0.080(2)&0.067(3)&  -0.985(5)& 0.049(4)\\
&\emph{AM}&  \textbf{0.327}(1)& 0.218(2)&0.211(3)& 0(5) & 0.207(4)&  \textbf{0.187}(1)& 0.180(2)&0.058(4)& -0.996(5) & 0.108(3)\\
\cline{1-2}

\multirow{2}{*}{\emph{polblogs}}&\emph{k-means}& \textbf{0.427}(1)& 0.357(2)&0.278(3)&  0.200(4)& 0.084(5)& \textbf{0.130}(1)& -0.066(2)&-0.106(3)&  -0.569(5)& -0.187(4)\\
&\emph{AM}& \textbf{0.425}(1)& 0.376(2)&0.291(3)&  0.266(4)& 0.065(5)& \textbf{0.131}(1)& -0.096(2)&-0.123(3)&  -0.509(5)& -0.176(4)\\
\cline{1-2}

\hline
\end{tabular}
\label{tab:community_result}
\end{table*}

We first test our algorithm on the clustering task. We set the approximation ratio $\epsilon=0.1$, and set the dimension of representations at most $200$ for all graphs. Firstly, we perform our FI-GRL algorithm on graphs to learn low-dimensional representations. Then, two different type of clustering algorithms, i.e., $k$-means clustering and agglomerative clustering method (AM) are applied. We use two different metrics to evaluate the clustering results, i.e., modularity  \cite{newman2006modularity} and permanence \cite{chakraborty2014permanence}.
\begin{itemize}
\item \textbf{Modularity} \cite{newman2006modularity}: This is the most widely used metric for evaluating clustering results on graphs. Modularity measures the benefits of nodes joining a cluster under the Null model. Specifically, modularity is defined as:
$
Q=\frac{1}{2m}\sum_{vw}[\mathbf{W}_{vw}-\frac{\mathbf{D}_{vv}\cdot \mathbf{D}_{ww}}{2m}]\delta (c_v,c_w),
$
where $\delta$ is the indicator function. $c_v$ indicates the cluster node $v$ belongs to. Generally, a modularity score greater than $0.3$ means a good clustering result. To punish clearly wrong cluster membership assignment, we add a penalty which is proportional to the inverse of the node's degree.

\item \textbf{Permanence} \cite{chakraborty2014permanence}: Permanence is a node-based metric, which explicitly evaluate the cluster membership affiliation of each node. It is more strict, since it considers the cluster configuration nodes connecting to. For a node $v$ in cluster $c$ , the permanence is defined as follows:
$
Perm_c(v) = [\frac{I_c(v)}{E^c_{max}(v)}\times \frac{1}{\mathbf{D}_{vv}}]-[1-C^c_{in}(v)],
$
where $I_c(v)$ is the internal degree, $E^c_{max}(v)$ is the maximum degree that node $v$ links to another cluster, and $C^c_{in}(v)$ is the internal clustering coefficient. The total permanence score of the graph is the sum of the permanence score of every node. Empirically, positive permanence score indicates a good clustering result.
\end{itemize}

The results are listed in Table \ref{tab:community_result}. Our algorithm FI-GRL outperforms other graph representation learning algorithms, i.e., GraphSAGE, node2vec, LINE, Deepwalk over almost all datasets using $k$-means and AM in terms of both modularity and permanence (except on karate network, deepwalk outperforms FI-GRL in terms of modularity under AM). Specifically,node2vec achieves the second best results. It performs badly on small networks, i.e., karate in terms of modularity and karate, dolphins, us-football in terms of permanence. LINE fails on capturing the macroscopic structure of the graph since it only preserves local information. Some zero results of LINE on modularity means that there are a bunch of nodes that LINE assigns to a clearly wrong cluster. GraphSAGE and Deepwalk give mediocre results.  In terms of permanence, all other methods cannot preserve the cluster information on polblog, which is a tough case.
%Our algorithm shows its superior ability in preserving the cluster information even in the tough case, due to its projection-cost preserving property and its relationship between normalized cuts.
The overall performance is reported in Figure \ref{fig:clustering_performance_bar}. More precisely, combining the results using $k$-means and AM, in terms of modularity our algorithm FI-GRL improve node2vec by $9\%$, GraphSAGE by $36\%$, LINE by $153\%$ and Deepwalk by $45\%$. FI-GRL gets an improvement of $39\%$ over node2vec, $95\%$ over GraphSAGE and $115\%$ over Deepwalk in term of permanence. 
\vspace{-10pt}
\begin{figure}[htbp]
\begin{center}
\includegraphics[width=\linewidth]{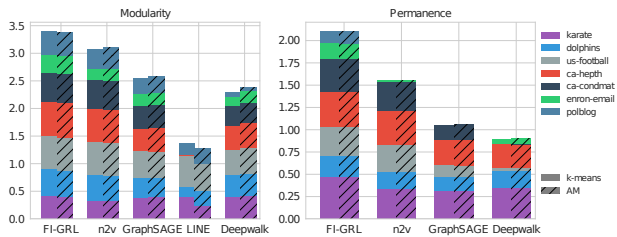}
\end{center}
\caption{
{ The overall performance on Clustering in terms of Modularity and Permanence.}
}
\label{fig:clustering_performance_bar}
\end{figure}

\subsection{Structural Hole Detection}
We consider another task, which is focusing on the microscopic level, called structural hole detection. Structural holes are the important nodes that locate at key topological positions. Once they are removed, the network will fall apart. Finding structural holes in graphs is a critical task for graph theory and information diffusion. To achieve this task, we first transform graph into a low-dimensional subspace using our algorithm, and then find structural holes in that space. We devise a metric for ranking nodes in the low-dimensional subspace:
\begin{itemize}
\item \textbf{Relative Deviation Score} (RDS): Let $\mathbf{y}_v \in \mathbb{R}^k$ be the low-dimensional representation for node $v$. $k$-means will give a clustering result with cluster set  $\mathcal{C}$. RDS estimates the deviation of a node from its own cluster attracted by other clusters in terms of relative radius. More precisely,
$
RDS(v)=\max_{C\in \mathcal{C}} \frac{\parallel \mathbf{y}_v-\mathbf{u}_{C_v}\parallel_2/R_{C_v}}{\parallel\mathbf{y}_v-\mathbf{u}_{C}\parallel_2/R_C}
$, where $C_v$ is the cluster that $v$ belongs to. $R_C=\sum_{i\in C}\parallel \mathbf{y}_i - \mathbf{u}_C\parallel_2$ represents the radius of cluster $C$. And $\mathbf{u}_C=\frac{1}{|C|}\sum_{i\in C} \mathbf{y}_i$ is the center of cluster $C$.
\end{itemize}

Nodes with highest RDS scores are regarded as the structural holes since they strongly connect at least two clusters. We use an evaluation metric called Structural Hole Influence Index (SHII) \cite{he2016joint} to evaluate the selected structural holes. SHII is computed via a process of information diffusion. For each selected structural hole, we run the information diffusion under linear threshold model (LT) and independent cascade model (IC) 10000 times to get average SHII score. The SHII score is defined as follows:

\begin{itemize}
\item \textbf{Structural Hole Influence Index} \cite{he2016joint}: Note that generally a node cannot activate the influence maximization process by itself. For a selected structural hole $v$, we want to do the following procedure several times: combining $v$ with some randomly chosen node set $S_v$ in cluster $C_v$ as a seed set to engage a influence maximization process in the network. SHII evaluate the ratio of activated nodes that are in other clusters $
SHII(v,S_v)=\frac{\sum_{C_i\in \mathcal{C} \backslash C_v} \sum_{u\in C_i}I_u}{\sum_{u\in C_v}I_u}$, where $\mathcal{C}$ is the set of communities and $I_u$ is the indicator function indicating whether node $u$ is influenced. And in our experiment, we set the size of the sampled activation set $|S_v|$ as $0.1|C_v|$. 
\end{itemize}

The results are shown in Table \ref{tab:sh}. According to characteristics of different networks, we tune all algorithms to select a certain number of structural holes. Too many of them will result in the activation of the entire network. For karate network, three structural holes are selected. The topological structure of karate shown in \cite{zachary1977information} demonstrates that the structural holes our algorithm selected are in critical positions that are bridging two clusters. More precisely, our results are superior to other structural hole selection methods, including the state-of-the-art algorithm, HAM \cite{he2016joint}. It demonstrates the efficacy of our algorithm in preserving microscopic structure.

\begin{table*}[htbp]
\scriptsize

\caption{
\bf{Performance on Structural Hole Detection under LT and IC Models}}
\centering %set to center
\begin{tabular}{cccccccccc}
\hline
\multirow{3}{*}{}& &\multicolumn{7}{c}{Comparative Methods}\\
\cline{4-10}
Datasets&\#SH&Influence Model&FI-GRL&HAM&Constraint&PageRank&BC&HIS&AP\_BICC \\
\hline
\multirow{3}{*}{karate}&\multirow{3}{*}{3}&LT&\bf 0.595&0.343&0.295&0.159&0.159&0.132&0.295  \\
& &IC&\bf 0.003&0.002&0.002&0.001&0.001&0.001&0.002  \\
\cline{3-10}
& &Structural Holes&[3 14 20]&[3 20 9]&[1 34 3]&[34 1 33]&[1 34 33]&[32 9 14]&[1 3 34]\\
\hline
\multirow{2}{*}{youtube}&\multirow{2}{*}{78}&LT&\bf 4.129&3.951&2.447&1.236&1.226&3.198&1.630  \\
& &IC&\bf 3.024&2.452&1.254&0.662&0.791&2.148&0.799  \\
\cline{3-10}
\multirow{2}{*}{dblp}&\multirow{2}{*}{42}&LT&\bf 6.873&5.384&0.404&0.357&0.958&0.718&0.550  \\
& &IC&\bf 5.251&3.578&0.229&0.190&0.821&0.304&0.495  \\
\hline
\end{tabular}
\label{tab:sh}
\end{table*}

\subsection{Performance on Unseen Nodes}
\label{sec:perfor_streaming}
To evaluate the performance of our algorithm under the inductive learning scenario, we artificially simulate the process of generating unseen nodes. Specifically, for a static graph, we randomly extract $\alpha$ proportion of nodes as an original graph for graph representation learning. The other nodes are treated as unseen nodes. In this experiment, we set the approximation ratio as $\epsilon=0.1$, which is accurate enough for most applications. FI-GRL uses a folding-in technique to learn the meaningful representations for unseen nodes. If the learned representations are accurate, $k$-means clustering results over the representations of the entire graph will be satisfactory. 

The clustering performance in terms of modularity with the variation of the proportion of unseen nodes (i.e., $1-\alpha$) is illustrated in Figure \ref{fig:streaming_performance}. As we can see, when the proportion of unseen nodes is not greater than $40\%$, the clustering performance is stable at a good quality. After increasing the proportion to $50\%$, too many unseen nodes added are dramatically changing the main skeleton of the network. Since we add a penalty to clearly wrong cluster assignment, the clustering performance degenerates sharply. For a small network like dolphins, the results fluctuate to a certain extent, e.g., at $10\%$ and $30\%$. In fact, polblog and football give the most stable performance, as they retain almost the same results from $5\%$ to $40\%$. We conjecture that the representations of the unseen nodes are more accurate if the nodes are well-connected (polblog has a relatively high edge density) or the network are well-structured and invulnerable (football has 11 clusters with nearly equal size). Overall, our FI-GRL  is flexible enough to give a satisfactory representation learning result for inductive learning even when the proportion of the unseen nodes is large (up to $40\%$).

\vspace{-10pt}
\begin{figure}[htbp]
\begin{center}
\includegraphics[width=0.8\linewidth]{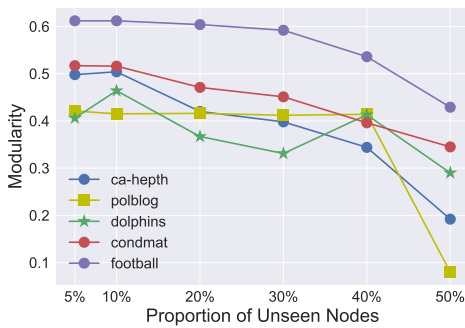}
\end{center}
\caption{
{Performance on unseen nodes evaluated on clustering}
}
\label{fig:streaming_performance}
\end{figure}

%\subsection{Dimension Analysis}
%The dimension of the learned graph representation vectors is usually artificially assigned. Otherwise, cluster number of the graph is a good choice for dimension suggested by the work on spectral analysis of graph matrices \cite{shen2010}. 
\subsection{Parameter Analysis: Approximation Ratio and Sketch Size}
\label{sec:para_empirical_sketch}
To quantitatively measure the ability of our framework to capture crucial information of graphs and preserve the projection-cost, we evaluate the performance of our algorithm by varying the approximation ratio and the sketch size. To get a visual sense, we plot three-dimensional and two-dimensional representations of the karate network \cite{zachary1977information}, which has two ground-truth clusters and several structural holes, in Figure \ref{fig:karate_plot}, when we set the approximation ratio $\epsilon=0.3$ and $\epsilon=0.1$, separately. As we can see, at $\epsilon=0.3$, nodes between two clusters are mixed up with each other. So at a low resolution, the cluster information is not well-maintained in the learned subspace. While at $\epsilon=0.1$, nodes are located in clusters exactly the same as the ground truth. Moreover, structural holes bridging between two clusters can be easily identified from the two-dimensional view where nodes in different clusters form nearly orthogonal subspaces, and they are linearly separable.
\vspace{-10pt}
\begin{figure}[htbp]
\begin{center}
\includegraphics[width=0.8\linewidth]{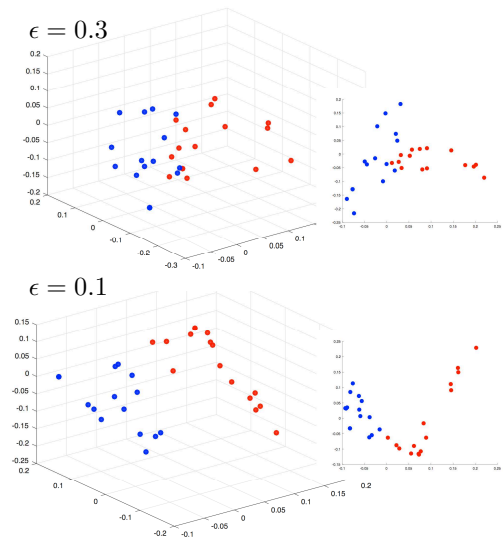}
\end{center}
\caption{
{Visualization at different approximation ratios}
}
\label{fig:karate_plot}
\end{figure}

To demonstrate the ability of our algorithm in preserving projection-cost, we compute the relative projection-cost with the variation of the sketch size. More precisely, we calculate $$(||\boldsymbol{\mathcal{L}}-\mathbf{\tilde{U}}_k\mathbf{\tilde{U}}_k^T\boldsymbol{\mathcal{L}}||_F^2-||\boldsymbol{\mathcal{L}}_{\backslash k}||_F^2)/||\boldsymbol{\mathcal{L}}_{\backslash k}||_F^2,$$ where $\boldsymbol{\mathcal{L}}_{\backslash k}$ is the residual of optimal rank $k$ approximation $\boldsymbol{\mathcal{L}}_{k}$ on $\boldsymbol{\mathcal{L}}$. The dimension $k$ is set to $min\{0.1n,200\}$, which is sufficient for applications we concern. We perform our algorithm $10$ times at each sketch size and the result is shown in Figure \ref{fig:projection_cost}. Relative projection-cost has decreased rapidly at very small sketch size. At sketch size of $400$, FI-GRL already can achieve excellent results. Towards sketch size of $1000$, the result is nearly optimal for graph representation learning purpose. For large networks, the approximation is even more accurate. Since the network is usually very sparse, nodes are laying in a small subspace compared to the size of the network. Although our algorithm is a randomized algorithm, the variance of at each sketch size is rather small. It implies that we can treat FI-GRL as a deterministic algorithm since the chance of the failure of our algorithm in preserving projection-cost is pretty rare especially when the sketch size is large.

%\vspace{-10pt}
\begin{figure}[htbp]
\begin{center}
\includegraphics[width=\linewidth]{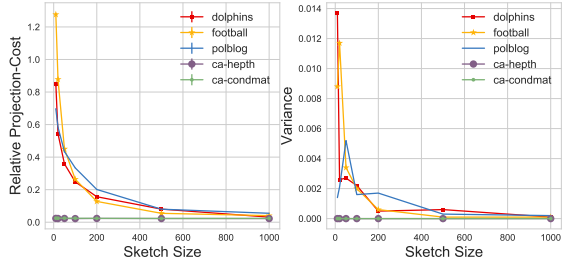}
\end{center}
\caption{
{ Relative Projection-Cost with the variation of Sketch Size}
}
\label{fig:projection_cost}
\end{figure}

\subsection{Running Time}
\label{sec:runing_time}
Finally, the computational time of our algorithm in the static scenario against other competing graph representation algorithms is listed in Table \ref{tab:time}. All other algorithms learn the representations of $100$ dimension. FI-GRL is outstanding in terms of computational cost. At sketch size of $1000$ where FI-GRL can give nearly optimal results for tested datasets, it takes only $two$ minutes to learn the graph representations on dblp. While GraphSAGE, node2vec and deepwalk take more than 10 hours in order to achieve the same task.

\begin{table*}[!tb]
\footnotesize
\caption{
\bf{Running time}}
\centering %set to center
\begin{tabular}{c||ccccccccc}
\hline

 Methods (Sketch Size)&\emph{karate}&\emph{dolphins}&\emph{us-football}&\emph{polblog}&\emph{ca-hepth}&\emph{ca-condmat}&\emph{email-enron}&\emph{youtube}&\emph{dblp}\\
\hline

FI-GRL(100)&0.005s&0.007s&0.042s&0.032s&0.181s&0.424s&0.672s&7.14s&8.33s     \\

FI-GRL(200)&0.006s&0.009s&0.051s&0.064s&  0.388s &0.913s&1.508s&15.85s&18.09s\\

FI-GRL(500)&0.019s&0.015s&0.037s&0.081s&   0.976s&2.804s&6.776s&45.73s&57.30s\\

FI-GRL(1000)& 0.040s&0.035s&0.047s&0.159s& 2.576s&7.262s&11.48s&1m41s&2m14s\\
\hline
\hline
node2vec&0.807s&3.110s&1.442s&33.34s&74.83s&2m57s&48m17s&$>$10h&$>$10h\\
Deepwalk&4.123s&10.876s&10.92s&2m10s&15m59s&43m9s&1h18m&$>$10h&$>$10h\\
GraphSAGE&18.348s&43.791s&37.252s&6m3s&49m20s&3h51m&4h39m&$>$10h&$>$10h\\

\hline
\end{tabular}
%\normalsize
%\begin{flushleft}{
%The best results under permanence is shown in the table. We report the corresponding $\alpha$ %and community number.
%}
%\end{flushleft}
\label{tab:time}
\end{table*}

\section{Related Work}
\label{sec:work}
\subsection{Graph Representation Learning}
Graph representation learning has been an important problem to facilitate the implementation of classic machine learning and data mining algorithms on graphs. Some methods try to explicitly preserve proximity between nodes, such as 
\cite{tang2015line} introduces an edge-sampling method, \cite{wang2016structural} develops a semi-supervised deep model, \cite{jiang2018spectral} enhances communities and structural holes by non-backtracking random walk. Some methods exploit matrix factorization technique, e.g.,  \cite{ou2016asymmetric} factorizes asymmetric transitivity related matrices on directed graphs, \cite{yang2017fast} proposes an update algorithm on matrix forms.
Some algorithms are formulating the problem into a traditional machine learning approach, such as, Deepwalk \cite{perozzi2014deepwalk} learns latent representations by treating truncated walks as sentences, \cite{tu2016max} optimizes a max-margin classifier.
Several methods focus on heterogeneous scenario, such as \cite{ma2017multi} models the multi-view graph data as tensors, \cite{cavallari2017learning} learns the representations of clusters, \cite{tu2017cane} learns context-aware representations, \cite{chang2015heterogeneous} creates a multi-resolution deep architecture, \cite{yang2015network} formulates a Deepwalk-based matrix factorization with incorporating text features, \cite{dong2017metapath2vec} introduces metapath-based random walks for representation learning.
\begin{comment}
\cite{ou2016asymmetric} learns representations on directed graphs by factorizing asymmetric transitivity related matrices.
\cite{yang2015network} formulates Deepwalk as a matrix factorization and incorporates text features of nodes into graph representation learning.
\cite{chang2015heterogeneous} create a multi-resolution deep architecture to capture both network structure and rich content and linkage information in a heterogeneous network.
\cite{ma2017multi}  models the multi-view graph data as tensors and apply tensor factorization to learn the multi-view representations.
\cite{cavallari2017learning} introduces  a community-aware high-order proximity to learn the representation of clusters instead of individual nodes to facilitate community-level applications.
\cite{tu2017cane} learns context-aware representations for nodes to capture the fact that nodes show different aspects when interacting with different neighbors.
\cite{dong2017metapath2vec} constructs the heterogeneous neighborhoods based on meta-path-based random walks and uses a heterogeneous skip-gram model to learn the node representations.
\end{comment}

One line of work that are similar to our approach is graph representation learning on dynamic networks. 
\cite{zhou2018dynamic} investigates the role of closed triads at different time steps.
\cite{li2017attributed} integrates node attributes by utilizing spectral decomposition and matrix perturbation theory in a dynamic setting.
\cite{ma2018depthlgp} aims at preserving high-order proximity by using nonparametric probabilistic modeling and deep learning, which can be generalize to unseen nodes.
\cite{hamilton2017inductive} presents several types of aggregators for aggregating features from nodes' local neighborhoods.
In contrast to these works, by introducing randomization and approximation strategies, our approach focuses on building a graph representation learning framework that is fast, theoretically guaranteed and can generalize to unseen nodes.

\subsection{Randomized Dimension Reduction}
Randomized algorithms are often adopted in dimension reduction due to its speed and the solid theory supporting it. \cite{halko2011finding} surveys randomized algorithms for low rank approximation and presents several algorithms for address different situations. The algorithm proposed is more robust than Krylov subspace methods for sparse input matrix. \cite{liberty2013simple} adapt a well known streaming algorithm for approximating item frequencies to find the matrix sketch. Combined with SVD and a special update strategy, the proposed algorithm becomes deterministic and computationally competitive. \cite{cohen2015dimensionality} devises a theoretical framework by deriving a series of bounds in terms of required dimensions for applying random row projection, column selection, and approximate SVD, which can used to better solve $k$-means clustering and low rank approximation problem. \cite{fern2003random} uses random projection in a cluster ensemble approach to achieve better and more robust clustering performance.  \cite{bingham2001random} uses random projection technique to deal with text and image data. It empirically demonstrates that random projection yields comparable results compared to conventional deterministic methods (e.g., PCA), but it is computationally significantly less expensive than PCA. \cite{boutsidis2015randomized} presents the first provably accurate feature selection method for $k$-means clustering. Two feature extraction methods using random projection and fast approximate SVD are proposed, which improves upon the existing results in terms of time complexity. Our approach uses state-of-the-art random strategies in graph representation learning and several bounds and theorems are proved to guarantee the performance of the learned representations.

\section{Conclusion}
\label{sec:conclusion}
In this paper, we propose a fast inductive graph representation learning framework, namely FI-GRL, to transform the topological structure of graphs into a low-dimensional space. It explicitly decouples relational information in graphs into a randomized subspace spanned by a random projection matrix. The sketch obtained are much smaller and yet inherits the property associated with the normalized cut by preserving projection-cost. By exploiting the constrained low rank approximation, the dimension of the sketch is further reduced and the compact hidden pattern is finally extracted. The connection between randomized algorithm and graph representation learning is built by thoroughly theoretical analysis. FI-GRL is flexible enough to deal with massive scale graphs and graph with unseen nodes. Overall, our algorithm is fast, easy to implement and theoretically guaranteed. The empirical study demonstrates the superiority of our algorithm on both efficacy and efficiency.

\section*{Acknowledgement}
This work is supported in part by National Key R\& D Program of China through grants 2016YFB0800700, NSF through grants IIS-1526499, IIS-1763325, CNS-1626432, and NSFC 61672313, 61672051, 61872101.

%\begin{acks}
%This work is supported in part by NSF through grants IIS-1526499, and CNS-1626432, and NSFC 61672313, 61672051, 61503253, and NSF of Guangdong Province 2017A030313339.
%\end{acks}

\bibliographystyle{IEEEtran}
\bibliography{figrl}

\end{document}